\newtheorem{theorem}{Theorem}
\newtheorem{proposition}{Proposition}
\newtheorem{remark}{Remark}
\begin{document}

\title{Average Age of Information in Wireless \\Powered Sensor Networks}

\author{Ioannis Krikidis,~\IEEEmembership{Fellow,~IEEE}
\thanks{I. Krikidis is with the Department of Electrical and Computer Engineering, Faculty of Engineering, University of Cyprus, Nicosia 1678 (E-mail: {\sf krikidis@ucy.ac.cy}).}}

\maketitle

\vspace{-0.4cm}
\begin{abstract}
In this letter, we deal with the age of information (AoI) for a sensor network with wireless power transfer (WPT) capabilities. Specifically,  we study a simple network topology, where a sensor node harvests energy from radio frequency signals (transmitted by a dedicated energy source) to transmit real-time status updates. The sensor node generates an update when its capacitor/battery becomes fully charged and transmits by using all the available energy without further energy management. The average AoI performance of the considered greedy policy is derived in closed form and is a function of the capacitor's size. The optimal value of the capacitor that maximizes the freshness of the information, corresponds to a simple optimization problem requiring a one-dimensional search. The derived theoretical results provide useful performance bounds for practical WPT networks.
\end{abstract}

\vspace{-0.3cm}
\begin{keywords}
Age of information, wireless power transfer, energy harvesting, sensor networks. 
\end{keywords}

\vspace{-0.3cm}
\section{Introduction}

\IEEEPARstart{W}{ireless} power transfer (WPT) via dedicated radio-frequency (RF) radiation is a promising technology for wireless communication systems, which are characterized by a massive number of low-power devices such as in the Internet-of-Things systems. It can support mobility, energy multicasting, non-line-of-sight propagation environments and contributes in the development of smaller, lighter and more compact devices. From the pioneering work of Varshney \cite{VAR}, who has introduced this concept, WPT has been extensively studied in the literature for different network architectures e.g., \cite{RUI2,RUI3}. However, most of the current works focus on complex network structures with limited practical interest and/or use conventional performance metrics e.g.,  throughput, coverage probability, diversity gain, information-energy capacity etc, which do not capture timeliness requirements that arise from sensing and actuation applications within machine-type communications. 

A performance metric that captures the freshness of the received information and is appropriate for applications requiring timely information to accomplish specific tasks (e.g., sensor networks, cyberphysical systems, etc), was proposed in \cite{ROY} i.e., {\it age of information (AoI)}. It is defined  as the time elapsed since the generation of the freshest status update that has reached the destination. Initial works on AoI take into account traffic burstiness and  minimize the AoI from a queueing-theoretic standpoint under various service policies e.g., \cite{EPH,KOS}. Recent works employ the notion of AoI in energy harvesting communication systems (from natural renewable sources), and investigate transmission policies that minimize AoI-based performance metrics \cite{ULU1,ULU2,WU,ULU3}. On the other hand, the design of WPT-based communication systems with objective  to optimize AoI, is a new research area with potential applications. In \cite{DON}, the authors propose a two-way data exchanging system, where a master node transfers energy and information to a slave node, while the slave node uses the energy harvested to power the uplink channel; the average uplink AoI is derived in closed form.  Although AoI seems to be a natural design metric for WPT networks, other relevant works cannot be found in the literature. 

\begin{figure}[t]
\centering
 \includegraphics[width=0.85\linewidth]{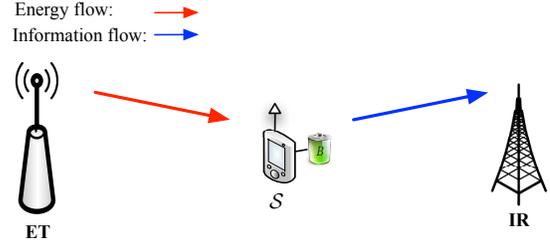}\\
\vspace{-0.5cm}
\caption{A three node sensor network topology; ET broadcasts energy, $\mathcal{S}$ communicates with IR by discharging its capacitor of size $B$.}\label{model}
\end{figure}

In this letter, we study a basic communication link where a sensor node with WPT capabilities communicates with a single destination. Specifically, the sensor node is equipped with a capacitor, which is charged via RF radiation by a dedicated energy source. Once the capacitor is charged, the sensor node transmits status updates containing the most recent information about parameters of interest by using all the energy stored. This online transmission policy does not require complicated energy management decisions (e.g., energy-depended thresholds) and is appropriate for WPT low complexity/low power devices. We investigate the freshness of the  received information and we provide simple closed form expressions for the average AoI, which depend on the size of the capacitor. The design of the system introduces an interesting tradeoff: a small capacitor is charged quickly and thus new updates are sent more frequently to minimize the AoI; on the other hand, a larger capacitor increases the transmit power and boosts the successful decoding. The optimal value of the capacitor is computed by solving a one-dimensional optimization problem. It is worth noting that the network topology and the transmission policy considered are inspired by commercial battery-free WPT products e.g., Powercast \cite{POW}; these devices are equipped with supercapacitors that deliver high power bursts when charged. Although our analysis refers to a simplistic system model, the derived theoretical results can serve as guidelines (performance bounds) for practical implementations.

\vspace{-0.4cm}
\section{System model}\label{stmod}

We assume a simple WPT sensor network consisting of one energy transmitter (ET), one sensor node, $\mathcal{S}$, and one information receiver (IR); all the nodes are equipped with single antennas. The ET is connected to the power grid and continuously broadcasts an energy signal with power $P$. The sensor node has WPT capabilities and harvests energy from the received RF signal; the harvested energy is stored in a capacitor of finite-size $B$. When the capacitor becomes fully charged, the sensor node generates a status update and transmits it towards the IR by using all the stored energy (greedy online policy \cite{POW}). Energy transmission and communication links are performed in orthogonal channels (e.g., different frequency bands) to avoid interference; in addition, time is considered to be slotted with a slot size equal to one time unit (due to the normalized slot duration, the measures of energy and power become identical and therefore are used interchangeably throughout the paper). Fig. \ref{model} schematically presents the system model.

The sensor node is able to harvest energy from the ET during the status transmission. This is feasible due to the orthogonality between the communication/harvesting links and the existence of an appropriate capacitor architecture that supports simultaneous transmission/harvesting (i.e., two antennas operating in different frequency bands for communication/harvesting;  a secondary storage capacitor/device stores up harvested energy while the transmitter is active \cite{RUI2}).

All wireless links experience Rayleigh block fading (channel is constant for one time slot and changes independently across time slots). Let $h_k,g_k\sim \exp(\lambda)$ denote the power of the channel fading for the link ET-$\mathcal{S}$ and $\mathcal{S}$-IR at the $k$-th time slot, respectively. In addition, all wireless links exhibit additive white Gaussian noise (AWGN) with variance $\sigma^2$. The energy stored (i.e., the amount of available energy in the capacitor) at time slot $k$, denoted as $E_k$, will evolve as follows\footnote{A linear WPT model is sufficient for the purposes of this work \cite{RUI3,DON} and provides useful lower bounds for the harvested energy achieved by non-linear models.} 
\begin{align}
E_{k}=\min\{\mathds{1}_{E_{k-1}<B}E_{k-1}+\eta P h_k, B\},
\end{align}
where $0 \leq \eta \leq 1$ denotes the RF-to-DC conversion efficiency (harvesting from the AWGN is considered negligible) and $\mathds{1}_X$ is the indicator function of $X$, with $\mathds{1}_X=1$ if $X$ is true and $\mathds{1}_X=0$ otherwise. If the capacitor becomes fully charged at time slot $k$ i.e., $E_{k}=B$, the sensor node transmits a status update to the IR, containing information for the considered parameters of interest as well as the time of generation of the update, with a spectral efficiency $r$ bits per channel use (BPCU) in the $(k+1)$-th time slot (a packet transmission is performed in one time slot). The signal-to-noise ratio at the IR for the $k$-th time slot is written as 
\begin{align}
\gamma_k=\frac{Bg_k}{\sigma^2}.
\end{align}

\noindent {\it Age of information:} In time slot $n$, AoI is the difference between $n$ and the generation time $U(n)$ measured in time slots of the latest received update at the IR \cite{ROY} i.e., 
\begin{align}
\Delta(n)=n-U(n).
\end{align}
Fig. \ref{model2} presents an example of the age evolution for the sensor network considered. An update is generated at the sensor node when the capacitor becomes fully charged and transmitted in the next slot (one time slot of delay); in case of a successful decoding i.e.,  $\log_2(1+\gamma_k)\geq r$, the AoI at the IR is reset to one. If $n_k$, $n_{k+1}$ represent the time slots of two consecutive updates at the IR,  $X_k=n_{k+1}-n_{k}$ denotes the $k$-th interarrival time  (time  between $n_{k+1}$ and $n_k$ in time slots). In addition, $T_k$ denotes the time (in time slots) between two consecutive capacitor recharges; we have $X_k=\sum_{i=1}^{M}T_i$, where $M$ is a discrete random variable that denotes the number of the update transmissions until successful decoding.

\begin{figure}
\centering
 \includegraphics[width=0.8\linewidth]{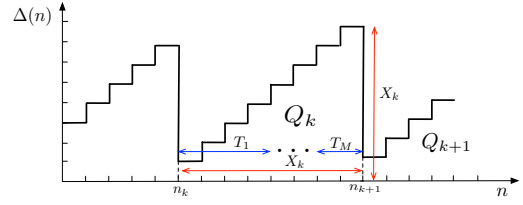}\\
\vspace{-0.3cm}
\caption{Example of AoI; $X_k$ denotes the interarrival time between two consecutive received updates, $T_k$ is the time between two consecutive capacitor's recharges, $Q_k$ is the area under $\Delta(n)$ corresponding to the $k$-th received update.}\label{model2}
\end{figure}

\section{Analysis of the average Age of Information}
In this section, we analyze the performance of the sensor network considered in terms of the average AoI. Firstly, we state two propositions which are used to derive the AoI performance.

\begin{proposition}\label{prop1}
The first-order and second-order moments of the time between two consecutive capacitor recharges, respectively, are given by
\begin{align}
&\mathsf{E}(T)=1+\beta, \\
&\mathsf{E}(T^2)=1+3\beta+\beta^2.
\end{align}
where $\beta=\lambda B/(\eta P)$.
\end{proposition}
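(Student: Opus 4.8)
The plan is to recognize $T$ as a first-passage time for the accumulated harvested energy and to map it onto the counting process of a homogeneous Poisson process, whose moments are standard.

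First I would observe that immediately after a status transmission the capacitor is emptied: once $E_{k-1}=B$, the indicator forces $E_{k}=\min\{\eta P h_k,B\}$, so during a recharge cycle the stored energy simply accumulates as $\sum_{i}\eta P h_i$ until it first reaches $B$. The cap imposed by the $\min\{\cdot,B\}$ operation is irrelevant for the recharge time itself, since hitting (or exceeding) $B$ is exactly the event that terminates the cycle. Hence, writing $b=B/(\eta P)$,
\[
T=\min\Big\{n\ge 1:\textstyle\sum_{i=1}^{n}h_i\ge b\Big\}.
\]

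Next I would exploit the fact that i.i.d.\ $\exp(\lambda)$ variables are the inter-arrival times of a homogeneous Poisson process of rate $\lambda$. Letting $N$ denote the number of arrivals in the interval $[0,b]$, one has $N\sim\mathrm{Poisson}(\lambda b)=\mathrm{Poisson}(\beta)$, and the first-passage description above yields the identity $T=N+1$ (even when $h_1\ge b$, i.e.\ $N=0$, the cycle still lasts one slot). Equivalently, I could bypass the process interpretation and obtain the law of $T$ directly from the Erlang tail $\mathsf{P}(T>n)=\mathsf{P}(\sum_{i=1}^{n}h_i<b)=\sum_{k=0}^{n-1}e^{-\beta}\beta^{k}/k!$, which gives the shifted-Poisson mass $\mathsf{P}(T=n)=e^{-\beta}\beta^{\,n-1}/(n-1)!$ for $n\ge 1$.

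Finally, I would read off the two moments from the known Poisson values $\mathsf{E}(N)=\mathrm{Var}(N)=\beta$, so that $\mathsf{E}(N^2)=\beta+\beta^2$: this gives $\mathsf{E}(T)=\mathsf{E}(N)+1=1+\beta$ and $\mathsf{E}(T^2)=\mathsf{E}(N^2)+2\mathsf{E}(N)+1=1+3\beta+\beta^2$. The only delicate point is the unit offset, namely correctly accounting for the fact that a recharge cycle always occupies at least one slot, so that $T$ is a \emph{shifted} rather than a plain Poisson variable; getting this boundary right is precisely what produces the $1+\beta$ mean and the additive constants appearing in $\mathsf{E}(T^2)$.
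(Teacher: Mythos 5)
Your proof is correct, and it reaches the result by a genuinely different (and slicker) route than the paper. The paper establishes the law of $T$ by brute force: in Appendix A it evaluates a $K$-fold nested integral of the joint exponential density to show $\mathsf{P}\{T=K\}=\Pi(\beta/\lambda,K)=\frac{\beta^{K-1}}{(K-1)!}e^{-\beta}$, and then in Appendix B it computes $\mathsf{E}(T)=\sum_k k\,\Pi(\beta/\lambda,k)$ and $\mathsf{E}(T^2)=\sum_k k^2\,\Pi(\beta/\lambda,k)$ by summing the resulting series with the help of tabulated identities from Gradshteyn--Ryzhik. You instead recognize the accumulated harvested energy as the arrival epochs of a rate-$\lambda$ Poisson process on $[0,b]$ with $b=B/(\eta P)$, identify $T=N+1$ with $N\sim\mathrm{Poisson}(\beta)$ (your Erlang-tail computation is exactly the paper's multiple integral in disguise), and read off both moments from $\mathsf{E}(N)=\mathrm{Var}(N)=\beta$ with no series manipulation at all. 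Both arguments rest on the same shifted-Poisson law for $T$; yours buys a shorter derivation and makes the structural reason for that law transparent, while the paper's direct integral is self-contained and requires no appeal to Poisson-process theory. You also correctly handle the one delicate modeling point — that harvesting continues during the transmission slot, so each cycle restarts from zero and lasts at least one slot — which is exactly what makes $T$ a shifted rather than plain Poisson variable.
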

\begin{proof}
See Appendix \ref{proof1}.
\end{proof}

Based on Proposition \ref{prop1}, we have the following proposition for the first-order and the second-order moments for the time between two consecutive successful delivered status updates. 

\begin{proposition}\label{prop2}
The first-order and the second-order moments for the interarrival time between two consecutive updates at the IR, respectively, are given by
\begin{align}
&\mathsf{E}(X)=\frac{1+\beta}{\pi}, \\
&\mathsf{E}(X^2)=\frac{1+3\beta+\beta^2}{\pi}+\frac{2(1+\beta)^2(1-\pi)}{\pi^2},
\end{align}
where $\pi=\mathsf{P}\{\log_2(1+\gamma_k)>r \}=\exp(-\lambda \frac{2^r-1}{B/\sigma^2})$ is the success probability for the link $\mathcal{S}$-IR (Rayleigh fading).
\end{proposition}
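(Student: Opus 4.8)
The plan is to exploit the compound (random-sum) structure $X=\sum_{i=1}^{M}T_i$ already established in the system model, where the $T_i$ are independent copies of the recharge time analyzed in Proposition~\ref{prop1} and $M$ counts the transmission attempts up to the first successful decoding. First I would argue that $M$ is geometrically distributed: each status transmission is decoded correctly with probability $\pi$, and because the communication-link fading $g_k$ is i.i.d.\ across slots, the success/failure outcomes of successive attempts are independent Bernoulli trials. Hence $\mathsf{P}\{M=m\}=(1-\pi)^{m-1}\pi$, which gives $\mathsf{E}(M)=1/\pi$ and $\mathsf{E}(M^2)=(2-\pi)/\pi^2$.

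The second ingredient is the independence between $M$ and the sequence $\{T_i\}$. The recharge times depend only on the harvesting-link fading $h_k$, whereas $M$ is determined solely by the decoding outcomes (i.e.\ by $g_k$); since the two links occupy orthogonal channels with independent fading, $M$ is independent of $(T_1,T_2,\dots)$ and the $T_i$ are i.i.d. This justifies treating $X$ as a genuine random sum of i.i.d.\ terms with a stopping count that is independent of the summands.

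For the first moment I would invoke Wald's identity, $\mathsf{E}(X)=\mathsf{E}(M)\,\mathsf{E}(T)=(1+\beta)/\pi$, using $\mathsf{E}(T)=1+\beta$ from Proposition~\ref{prop1}. For the second moment I would condition on $M$: given $M=m$, $X$ is a sum of $m$ i.i.d.\ terms, so $\mathsf{E}(X^2\mid M=m)=m\,\mathrm{Var}(T)+m^2(\mathsf{E}(T))^2$. Taking the expectation over $M$ yields the Blackwell--Girshick relation
\begin{equation*}
\mathsf{E}(X^2)=\mathsf{E}(M)\,\mathrm{Var}(T)+\mathsf{E}(M^2)\,(\mathsf{E}(T))^2 .
\end{equation*}
Substituting $\mathrm{Var}(T)=\mathsf{E}(T^2)-(\mathsf{E}(T))^2=\beta$ (again from Proposition~\ref{prop1}) together with the moments of $M$ reduces the claim to an algebraic identity, which after collecting the $1/\pi$ and $1/\pi^2$ terms matches the stated expression.

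The routine algebra is not the difficulty; the step that deserves the most care is the probabilistic justification of the random-sum representation---specifically, that the decoding outcomes form i.i.d.\ Bernoulli($\pi$) trials independent of the recharge times, so that $M$ is geometric and independent of the $T_i$. Once that is secured, Wald's identity and the conditional second-moment computation deliver both formulas directly.
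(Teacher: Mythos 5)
Your proposal is correct and follows essentially the same route as the paper's Appendix C: the paper likewise writes $X=\sum_{i=1}^{k}T_i$ with a geometric attempt count, obtains $\mathsf{E}(X)$ by summing $k\,\mathsf{E}(T)(1-\pi)^{k-1}\pi$ (i.e.\ Wald's identity), and computes $\mathsf{E}(X^2\mid k)=k\,\mathsf{E}(T^2)+k(k-1)\mathsf{E}(T)^2$ before averaging over $k$, which is algebraically identical to your Blackwell--Girshick formulation. Your explicit justification of the independence between $M$ and the $T_i$ is a welcome (if minor) addition that the paper leaves implicit.
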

\begin{proof}
See Appendix \ref{proof2}.
\end{proof}

For a time period of $N$ time slots where $K$ successful transmissions occur, the average AoI can be written as 
\begin{align}
\Delta_N &=\frac{1}{N} \sum_{n=1}^{N}\Delta(n)=\frac{1}{N}\sum_{k=1}^{K}Q_k=\frac{K}{N}\frac{1}{K}\sum_{k=1}^{K}Q_k,
\end{align}
where $Q_k$ denotes the area under $\Delta(n)$ corresponding to the $k$-th status update.
The time average $\Delta_N$ tends to the ensemble average age for $N\to \infty$ \cite{KOS} i.e., 
\begin{align}
\Delta=\lim_{N\rightarrow \infty} \Delta_{N}= \frac{\mathsf{E}(Q)}{\mathsf{E}(X)}, \label{res4}
\end{align}
where $\lim_{N\rightarrow \infty} \frac{K(N)}{N}=\frac{1}{\mathsf{E}(X)}$ is the steady state rate of updates generation.

The area under $\Delta(n)$ for the $k$-th update corresponds to the sum of $X_k$ rectangles with one side equal to one and the other side equal to $m$, with $1 \leq m \leq X_k$. Therefore, $Q_k$ can be written as
\begin{align}
Q_k=\sum_{m=1}^{X_k}m=\frac{X_k(X_k+1)}{2}.
\end{align}
By taking the expectation operator, the average area under $\Delta(n)$ can be expressed as 
\begin{align}
\mathsf{E}(Q)=\frac{\mathsf{E}(X^2)+\mathsf{E}(X)}{2}. \label{res3}
\end{align}
By using Propositions \ref{prop1} and \ref{prop2}, and by substituting \eqref{res3} in \eqref{res4}, we have the following theorem on the average AoI. 
\begin{theorem}\label{th1}
The average AoI for the considered sensor network is given by
\begin{align}
\Delta&=\frac{1}{2}\left( \frac{\mathsf{E}(X^2)}{\mathsf{E}(X)}+1 \right)=\frac{1+3\beta+\beta^2}{2(1+\beta)}+\frac{(1+\beta)(1-\pi)}{\pi}+\frac{1}{2}. \label{n1}
\end{align}
\end{theorem}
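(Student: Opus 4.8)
The plan is to read the theorem as a direct consequence of the renewal-reward identity \eqref{res4}, the area decomposition \eqref{res3}, and the moment formulas of Propositions \ref{prop1} and \ref{prop2}. Essentially all of the probabilistic content has already been discharged in establishing these ingredients, so the proof reduces to an assembly step followed by algebraic simplification.

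First I would derive the model-independent first equality in \eqref{n1}. Starting from $\Delta=\mathsf{E}(Q)/\mathsf{E}(X)$ and inserting $\mathsf{E}(Q)=\tfrac{1}{2}\bigl(\mathsf{E}(X^2)+\mathsf{E}(X)\bigr)$ from \eqref{res3} gives $\Delta=\bigl(\mathsf{E}(X^2)+\mathsf{E}(X)\bigr)/\bigl(2\mathsf{E}(X)\bigr)=\tfrac{1}{2}\bigl(\mathsf{E}(X^2)/\mathsf{E}(X)+1\bigr)$. This expression is valid for any interarrival distribution with finite second moment and does not yet use the specifics of the WPT model.

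Next I would specialize by substituting the explicit moments from Proposition \ref{prop2}. The central computation is the ratio $\mathsf{E}(X^2)/\mathsf{E}(X)$: dividing $\mathsf{E}(X^2)$ by $\mathsf{E}(X)=(1+\beta)/\pi$ cancels one factor of $\pi$ from each term of $\mathsf{E}(X^2)$ and one factor of $(1+\beta)$ from the second term, producing $\tfrac{1+3\beta+\beta^2}{1+\beta}+\tfrac{2(1+\beta)(1-\pi)}{\pi}$. Halving this sum, distributing the factor $\tfrac{1}{2}$, and adding the final $\tfrac{1}{2}$ yields the three-term closed form on the right-hand side of \eqref{n1}.

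The step most prone to slips --- though not conceptually difficult --- is this ratio bookkeeping: one must track the two powers of $\pi$ in $\mathsf{E}(X^2)$ against the single power in $\mathsf{E}(X)$ and cancel exactly one factor $(1+\beta)$ in the second term, keeping the two fractions over their correct denominators before combining. Beyond this, there is no genuine obstacle: given the propositions and the ergodic argument underlying \eqref{res4}, the theorem is an algebraic identity.
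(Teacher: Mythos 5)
Your proposal is correct and follows exactly the paper's route: substitute \eqref{res3} into \eqref{res4} to obtain $\Delta=\tfrac{1}{2}\bigl(\mathsf{E}(X^2)/\mathsf{E}(X)+1\bigr)$, then insert the moments from Propositions \ref{prop1} and \ref{prop2} and simplify. The ratio bookkeeping you describe checks out and reproduces the three-term closed form in \eqref{n1}.
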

From Theorem \ref{th1}, we have the following two remarks. 
\begin{remark}\label{rm1}
For the case where $P\rightarrow \infty$ and $B$ is a constant, the average AoI asymptotically converges to $\Delta\rightarrow 1/\pi$.
\end{remark}

\begin{remark}
For the case where $P\rightarrow \infty$ and $B\rightarrow \infty$ with a ratio $B/P=\theta$, the average AoI asymptotically converges to $\Delta\rightarrow  \frac{1+3(\lambda \theta/\eta)+(\lambda \theta/\eta)^2}{2(1+\lambda \theta/\eta)}+\frac{1}{2}$.
\end{remark}

If the objective of the system is to design the capacitor $B$ such as the IR has as much as possible fresh information, we introduce the following one-dimensional optimization problem; we assume that the transmit power is given and we minimize the AoI with respect to the size of the capacitor $B$. The optimal capacitor size is given by 
\begin{align}
B^*=\arg \min_{B>0} \Delta.  \label{op1}
\end{align} 

Given \eqref{n1}, unfortunately the optimization problem in \eqref{op1} does not admit closed-form solutions; however, the optimal $B^*$ can be solved numerically (e.g., {\it fminsearch} in MATLAB).

\section{Numerical results}

The simulation setup follows the description of Section \ref{stmod} with parameters $\sigma^2=-50$ dBm, $\eta=0.5$, $r=0.05$ BPCU; the sensor node is located $20$ meters away from both the ET and the IR; the channel power gains are modeled as $\lambda=10^{3} d^{\alpha}$, where $d$ is the link distance and  $\alpha=2.2$ is the path-loss exponent \cite{QIN}.

\begin{figure}
\centering
 \includegraphics[width=0.8\linewidth]{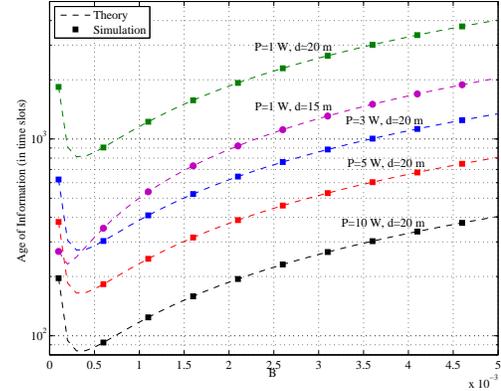}\\
\vspace{-0.3cm}
\caption{Average AoI versus capacitor's size $B$ for $P=\{1, 3, 5, 10\} $ Watt.}\label{fg1}
\end{figure}

\begin{figure}
\centering
 \includegraphics[width=0.8\linewidth]{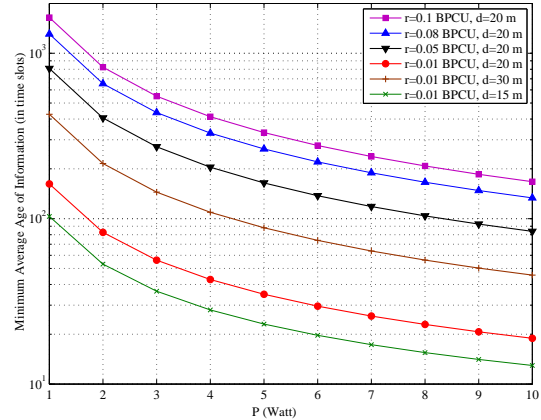}\\
\vspace{-0.3cm}
\caption{Minimun Average AoI versus $P$ for $r=\{0.01, 0.05, 0.08, 0.1 \}$ BPCU.} \label{fg2}
\end{figure}

Fig. \ref{fg1} depicts the average AoI versus the capacitor's size $B$ for different values of $P$ and $d$. As can be seen, the parameter $B$ significantly affects the average AoI performance of the system. Specifically, a high $B$ facilitates the transmission phase but requires more time slots to charge the capacitor, while a low $B$ activates the sensor node faster but the available energy for transmission is low.  The optimal value of $B$ is different for each $P$ value and matches with the optimal solutions given by the optimization problem in \eqref{op1}. We can also observe, that as $P$ (and/or $d$) increases the achieved average AoI decreases; a higher $P$ (and/or shorter $d$) charges the capacitor faster and therefore decreases the time that the sensor is idle (in the energy harvesting mode).  Finally, theoretical results perfectly match with the simulation results and validate the analysis. 

Fig. \ref{fg2} plots the minimum average AoI (corresponding to $B^*$) versus $P$ for different spectral efficiencies and distances. As it is expected, for a given $P$, a higher spectral efficiency (and/or distance) increases the achieved AoI, as it requires more transmission attempts before a successful transmission takes place; asymptotically ($P\rightarrow\infty$), the average AoI converges to an AoI floor that is equal to $1/\pi$ (see Remark \ref{rm1}).     

\vspace{-0.3cm}
\section{Conclusion}

In this letter, we have studied the performance of a basic wireless powered sensor network in terms of the average AoI. The sensor node transmits updates to the destination by discharging its capacitor, which is charged by a dedicated energy source. We derived simple closed form expressions for the average AoI and we showed that it highly depends on the capacitor's size. The optimal capacitor value has been computed numerically by formulating and solving a one-dimensional optimization problem. 

\appendices

\vspace{-0.3cm}
\section{Probability of capacitor charging in $K$ consecutive time slots} \label{app1}

Let $X_1,\ldots,X_K$ denote $K$ independent and identically distributed exponential random variables with rate parameter $\lambda$. We calculate the following probability
{\small
\begin{align}
&\Pi(y,K)=\mathsf{P} \left\{ \left( \sum_{i=1}^{K-1}X_i<y \right)\bigcap \left( \sum_{i=1}^{K}X_i\geq y \right) \right\} \nonumber \\
&=\int_{0}^{y}\!\!\int_{0}^{y-x_1}\!\!\!\!\!\!\!\!\!\!\!\ldots \int_{0}^{y-\sum_{i=1}^{K-2}x_i}\!\!\!\!\int_{y-\sum_{i=1}^{K-1}x_i}^{\infty}\!f_{X_1,\ldots,X_K}(x_1,\dots,x_K) dx_K\ldots dx_1 \nonumber \\
&=\int_{0}^{y}\!\!\int_{0}^{y-x_1}\!\!\!\!\!\!\!\!\!\!\!\ldots \int_{0}^{y-\sum_{i=1}^{K-2}x_i}\!\!\!\!\int_{y-\sum_{i=1}^{K-1}x_i}^{\infty}\!\lambda^K \exp\left(-\lambda \sum_{i=1}^{K}x_i \right)dx_K\ldots dx_1 \nonumber \\
&=\frac{1}{(K-1)!}(\lambda y)^{K-1}\exp(-\lambda y),
\end{align}} 
where $f_{X_1,\ldots,X_K}(x_1,\ldots, x_K)=\prod_{i=1}^{K} f_{X}(x_i)$ is the joint probability density function (PDF) of the random variables $X_1,\ldots,X_K$, and $f_X(x)=\lambda \exp(-\lambda x)$ is the PDF for an exponential random variable with parameter $\lambda$. 

By using the above computation, the probability that the capacitor is charged in $K$ consecutive time slots becomes equal to $\Pi(\beta/\lambda,K)$.

\vspace{-0.3cm}
\section{Proof of Proposition 1}\label{proof1}

The average time between two consecutive capacitor's recharges can be  computed as follows
\begin{align}
\mathsf{E}(T)&=\sum_{k=1}^{\infty}k\mathsf{P}\{T=k\}=\sum_{k=1}^{\infty}k\Pi(\beta/\lambda,k) \nonumber \\
&=\exp(-\beta)\sum_{k=1}^{\infty}\frac{k}{(k-1)!}\beta^{k-1}=\beta+1, \label{res1}
\end{align}
where $\Pi(\beta/\lambda,k)$ is given in Appendix \ref{app1}, and the result in \eqref{res1} is based on \cite[Eq. 1.212]{GRAD}. For the second-order moment of the time between two consecutive recharges, we have
\begin{align}
\mathsf{E}(T^2)&=\sum_{k=1}^{\infty}k^2\mathsf{P}\{T=k\}=\sum_{k=1}^{\infty}k^2\Pi(\beta/\lambda,k)  \nonumber \\
&=\exp(-\beta) \sum_{k=1}^{\infty}\frac{k^2}{(k-1)!}\beta^{k-1}=1+3\beta+\beta^2. \label{t2}
\end{align}

\vspace{-0.3cm}
\section{Proof of Proposition 2}\label{proof2}

The interarrival time can be written as $X=\sum_{i=1}^k T_i$, where $k$ denotes the number of the consecutive transmissions until successful decoding at the IR and it is a (positive integer) random variable. If $k$ transmissions occur, this means that $(k-1)$ consecutive transmissions were unsuccessful, while the $k$-th transmission was successful. Therefore, the average interarrival time becomes equal to
\begin{align}
\mathsf{E}(X)=\sum_{k=1}^{\infty} k\mathsf{E}(T)(1-\pi)^{k-1} \pi\;=\frac{\beta+1}{\pi}, \label{res2}
\end{align}
where $\pi$ denotes the success probability for the link $\mathcal{S}$-IR, and \eqref{res2} is based on \cite[Eq. 1.113]{GRAD}. 

For the second-order moment, we have
\begin{align}
X^2=\left(\sum_{i=1}^k T_i \right)^2=\sum_{i=1}^k T_i^2+2 \sum_{i=1}^{k}\sum_{j>i}^{k} T_i T_j.
\end{align} 
By taking the conditional expectation operator and after some basic manipulations, we have
\begin{align}
\mathsf{E}(X^2|k)=k \mathsf{E} (T^2)+k(k-1) \mathsf{E}(T)^2.
\end{align}
By using similar arguments with the computation of the first-order moment, we average out the number of transmissions i.e., 
\begin{align}
\mathsf{E}(X^2)&= \sum_{k=1}^{\infty} \mathsf{E}(X^2|k) (1-\pi)^{k-1}\pi \nonumber \\
&=\frac{E(T^2)}{\pi}+\mathsf{E}(T)^2\frac{2(1-\pi)}{\pi^2} \nonumber \\
&=\frac{1+3\beta+\beta^2}{\pi}+\frac{2(1+\beta)^2(1-\pi)}{\pi^2}, \label{final}
\end{align}
where \eqref{final} is based on the expressions in \eqref{res1}, \eqref{t2}.

\end{document}